\newtheorem{theorem}{Theorem}
\newtheorem{lemma}{Lemma}
\newtheorem{corollary}{Corollary}
\theoremstyle{definition}
\newtheorem*{assumption*}{\assumptionnumber}
\providecommand{\assumptionnumber}{}
\newcommand*\rel@kern[1]{\kern#1\dimexpr\macc@kerna}
\newcommand*\widebar[1]{%
  \begingroup
  \def\mathaccent##1##2{%
    \rel@kern{0.8}%
    \overline{\rel@kern{-0.8}\macc@nucleus\rel@kern{0.2}}%
    \rel@kern{-0.2}%
  }%
  \macc@depth\@ne
  \let\math@bgroup\@empty \let\math@egroup\macc@set@skewchar
  \mathsurround\z@ \frozen@everymath{\mathgroup\macc@group\relax}%
  \macc@set@skewchar\relax
  \let\mathaccentV\macc@nested@a
  \macc@nested@a\relax111{#1}%
  \endgroup
}
\DeclareMathOperator{\Var}{Var}
\def\P{\mathbb{P}}
\def\cN{\mathcal{N}}
\title{Gaussian Rank Verification}
\date{}
\author{Jeremy Goldwasser\thanks{Department of Statistics, University of California, Berkeley; \href{jeremy\_goldwasser@berkeley.edu}{jeremy\_goldwasser@berkeley.edu}}
  \and  
  Will Fithian\thanks{Department of Statistics, University of
    California, Berkeley}
  \and 
  Giles Hooker\thanks{Department of Statistics and Data Science, University of Pennsylvania}}
\DeclareMathOperator{\barsig}{\bar{\sigma}_{1j}}
\DeclareMathOperator{\barmu}{\bar{\mu}_{1j}}
\DeclareMathOperator{\bareta}{\bar{\eta}_{1j}}
\begin{document}

\maketitle
\begin{abstract}
Statistical experiments often seek to identify random variables with the largest population means. 
This inferential task, known as rank verification, has been well-studied on Gaussian data with equal variances. 
This work provides the first treatment of the unequal variances case, utilizing ideas from the selective inference literature.
We design a hypothesis test that verifies the rank of the largest observed value without losing power due to multiple testing corrections.
This test is subsequently extended for two procedures:
Identifying some number of correctly-ordered Gaussian means, and validating the top-$K$ set.
The testing procedures are validated on NHANES survey data. 
\end{abstract}

\section{Introduction}

The goal of many statistical analyses is to understand which random variables have the highest mean. 
For example, recommender systems may aim to find the products a user has the best chance of engaging with; 
global explanations of machine learning models seek to accurately highlight the features that tend to drive its behavior; 
and biological signals like genes may be compared to identify the salient markers of patient prognosis. 
Identifying the largest means has also been studied extensively in the bandits literature, e.g. \citet{bubeck2013multiple}.
While the true means are unknown, the analyses reveal some random observation based on the data. 

Often, the data being compared are normally distributed \citep{bellcurve}. 
In linear models, the errors are typically assumed to be normal, in which case the coefficients are as well. 
In other settings like effect size estimation, normality holds asymptotically via the central limit theorem. While the variance may be the same across groups, this is not necessarily true.

Formally, suppose we observe a single independent draw from each of $d$ random variables $X_j\sim \cN(\mu_j, \sigma_j^2)$, where $\sigma_j^2$ is known for all $j \in \{1, \ldots d\}$. Borrowing from \citet{HF}, the largest observed value will hereafter be referred to as the ``winner,'' and the second-largest as the ``runner-up.'' Further, we denote the unknown object with the largest mean $j^* = \mbox{argmax}_j \mu_j$ as the ``best''.

Assume without loss of generality that the data is ordered via 
$$X_1 > X_2 > \ldots > X_d.$$

\noindent Having observed this data, we would like to draw inferences about the ranking of the means $\mu_j$. Broadly, this is the task known as rank verification. 
Throughout this work, we will assume observations are distinct. However, \citet{HF} presents a simple argument justifying the same procedures in the presence of ties. 

When all variances are equal, a level-$\alpha/2$ test between the winner and the runner-up suffices to verify the winner $X_1$ as the best \citep{gutmann1987selected, HF} --- that is, $\mu_1 > \mu_{j} $, $\forall j > 1$ --- with probability exceeding $1-\alpha$ if this test rejects. \citet{gutmann1987selected} was the first to show this result. Later works extended it for larger classes of distributions. \citet{stefansson1988confidence} proved its validity under balanced samples from log-concave location families. More recently, \citet{HF} applied it to a broad subset of the exponential family.

The validity of this test does not hold in the unequal variances case. Figure \ref{fig:unequal_vars} gives a simple example of this, visualizing the PDFs of 5 Gaussians. Here, most distributions have similar means and very small variances, with little overlapping mass. The sole exception, which has the highest mean by a small margin, is shown in red. 

Suppose the high-variance random variable is drawn below the second and third place.
This entirely plausible event is visualized with the large dots in Figure \ref{fig:unequal_vars}. 
The tails of the low-variance distributions barely overlap, 
so the p-value testing the winner and runner-up is an infinitesimal $2\times10^{-9}$. 
This rejects for all reasonable $\alpha$, falsely verifying the winner.
Indeed, the type I error rate of this procedure at $\alpha=0.05$ is at least 34.4\% in this setting (Appendix A). 

\begin{figure}
    \centering
    \includegraphics[width=\linewidth]{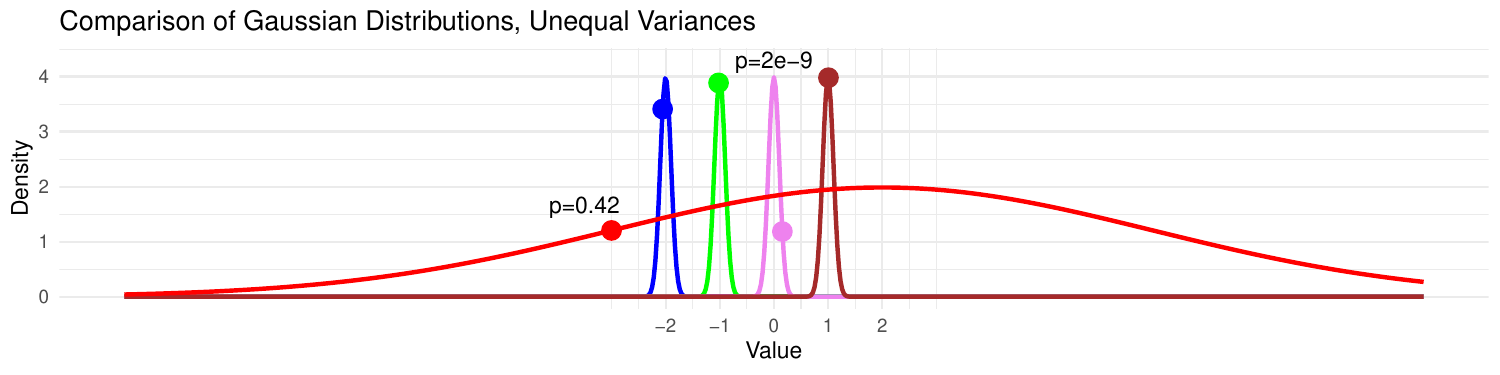}
    \caption{Testing the winner and runner-up fails in the unequal variances case. Wide distribution rescaled for ease of interpretation.}
    \label{fig:unequal_vars}
\end{figure}

In light of this, we develop a valid level-$\alpha$ hypothesis test to establish $\mu_1$ as the best. While this test considers ranks below the runner-up, it does not rely on costly multiple testing corrections that reduce the power. In Figure \ref{fig:unequal_vars}, this test's p-value is 0.42, comparing the winner against the high-variance observation. The non-significant p-value avoids a false rejection, unlike the test against the runner-up.

After establishing this test in Section \ref{sec:test}, we employ it for two related procedures, neither of which use multiple testing corrections. The first procedure identifies an integer $K\geq 0$ such that the order of the $K$ largest means matches the empirical ranking with high probability. The second is a test of the stability of the top-$K$ set, for some predetermined $K$. We introduce these testing procedures in Section \ref{sec:procedures}. Subsequently, we apply them to the NHANES health survey, and demonstrate their validity on simulated data.\footnote{\texttt{R} code to reproduce all figures and results is at \url{https://github.com/jeremy-goldwasser/Gaussian-Rankings}.}

\section{Verifying the Winner}\label{sec:test}

This section introduces a valid level-$\alpha$ test to verify the winner as the best. We state the main result, discuss its implications, then prove it using techniques from selective inference.

\subsection{Theorem Statement}
\begin{theorem}\label{thm:winner}
    Let $X_j \sim \cN(\mu_j, \sigma_j^2)$ independently for $j \in [d]=\{1,\ldots,d\}$, with realized values $x_j$. Assume $\sigma_j^2$ is known, and without loss of generality sort $X_j$ by their order statistics as above. 
    Define 
    
    $$\barmu = \frac{\sigma_j^2 x_1 + \sigma_1^2 x_j}{\sigma_1^2 + \sigma_j^2} \text{, }\barsig^2 = \frac{\sigma_1^4}{\sigma_1^2+\sigma_j^2}\text{, and } \bareta=\max(\barmu, \max_{k\neq 1,j} x_k).$$ 

    Consider the null hypothesis \mbox{$H_0:\{ \mu_1 \leq \max_{j>1} \mu_j \ \vert \ X_1\ \text{wins}\}$}, i.e. that the winner is \textit{not} the best. A valid p-value to test $H_0$ is  

    $$p_1^* = \max_{j>1} p_{1j} = \max_{j>1} 
    \frac{1-\Phi(\frac{x_1 - \barmu}{\barsig})}{1-\Phi(\frac{\bareta - \barmu}{\barsig})},$$

\noindent where $\Phi(\cdot)$ is the CDF of the standard normal distribution. The probability of falsely verifying the winner with this p-value is at most $\alpha$. 
    
\end{theorem}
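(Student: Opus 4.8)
The plan is to combine a selective-inference (truncated-Gaussian) conditioning argument with an intersection--union reduction. First I would reduce the composite null to a family of pairwise one-sided nulls. Writing $H_{0,j}:\mu_1\le\mu_j$, the event ``the winner is not the best'' is exactly $H_0=\bigcup_{j>1}H_{0,j}$. If each $p_{1j}$ is a valid (super-uniform) p-value for $H_{0,j}$ conditional on $\{X_1\text{ wins}\}$, then under any configuration in $H_0$ some $H_{0,j_0}$ holds, and $\P(p_1^*\le\alpha\mid\text{win})=\P(\max_{j>1}p_{1j}\le\alpha\mid\text{win})\le\P(p_{1j_0}\le\alpha\mid\text{win})\le\alpha$. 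Thus the whole task collapses to showing that each $p_{1j}$ is conditionally super-uniform, and no multiplicity correction is needed precisely because rejecting the union requires rejecting every $H_{0,j}$ (the max, not a Bonferroni-adjusted min).

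Second, for a fixed $j$ I would set up the conditioning. The natural sufficient statistic for the common mean on the boundary $\mu_1=\mu_j$ is $T_j=X_1/\sigma_1^2+X_j/\sigma_j^2$, and I would condition on $T_j$ together with the remaining observations $\{X_k=x_k\}_{k\neq1,j}$. Since $(X_1,T_j)$ are jointly Gaussian and the other coordinates are independent of $(X_1,X_j)$, a routine conditional-mean/variance computation gives $X_1\mid T_j\sim\cN\!\big(\barmu+\tfrac{\sigma_1^2(\mu_1-\mu_j)}{\sigma_1^2+\sigma_j^2},\,\barsig^2\big)$, where the offset vanishes exactly on the boundary. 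Hence $\barmu$ and $\barsig^2$ are precisely the conditional mean and variance of $X_1$ under $\mu_1=\mu_j$, and the conditional variance never depends on the unknown means.

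Third---and this is the crux---I would show that the selection event reduces to a single one-sided threshold. With $T_j$ held fixed, $X_j$ is an affine, increasing-in-$X_1$ function of $X_1$, and a short calculation shows $X_1>X_j\iff X_1>\barmu$ (the threshold $\tfrac{\sigma_1^2\sigma_j^2}{\sigma_1^2+\sigma_j^2}T_j$ equals the stated $\barmu$). The remaining constraints $X_1>x_k$ for $k\neq1,j$ contribute $X_1>\max_{k\neq1,j}x_k$, so intersecting everything the event $\{X_1\text{ wins}\}$ becomes exactly $\{X_1>\bareta\}$. Consequently, conditional on the nuisance statistics and on winning, $X_1$ on the boundary follows $\cN(\barmu,\barsig^2)$ truncated to $(\bareta,\infty)$, and the survival function of this truncated law evaluated at $x_1$ is exactly $p_{1j}$; being a probability-integral transform, it is uniform on $[0,1]$, so $p_{1j}$ is exactly calibrated at $\mu_1=\mu_j$.

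Finally I would pass from the boundary to the one-sided composite null by monotonicity. Because the offset above is increasing in $\mu_1-\mu_j$, the truncated-normal family $\{\,\cN(\theta,\barsig^2)\text{ on }(\bareta,\infty)\,\}_\theta$ has monotone likelihood ratio in its mean, so the boundary-calibrated survival statistic $p_{1j}$ is stochastically larger than uniform whenever $\mu_1\le\mu_j$. Since this super-uniformity holds for every fixed value of $(T_j,\{x_k\})$, integrating over those nuisance variables yields $\P(p_{1j}\le\alpha\mid\text{win})\le\alpha$ under $H_{0,j}$, which feeds back into the intersection--union step to complete the argument. I expect the main obstacle to be the third step: choosing what to condition on so that the multi-way winning event collapses to the clean region $\{X_1>\bareta\}$, and in particular verifying the identity $X_1>X_j\iff X_1>\barmu$ that makes the truncation point match the stated $\bareta$.
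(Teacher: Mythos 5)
Your proposal is correct and follows essentially the same route as the paper: an intersection--union (max p-value) reduction, then for each $j$ conditioning on $U=X_1/\sigma_1^2+X_j/\sigma_j^2$ and the remaining observations so that, at the boundary $\mu_1=\mu_j$, $X_1$ is $\cN(\barmu,\barsig^2)$ truncated to $(\bareta,\infty)$ and $p_{1j}$ is its tail probability at $x_1$ --- your bivariate-Gaussian conditional-moment computation is just a cleaner substitute for the paper's exponential-family completing-the-square, and your explicit MLR/stochastic-monotonicity step fills in a detail the paper glosses over when reducing the composite null to the boundary. One trivial slip worth fixing: given $T_j$, $X_j=\sigma_j^2T_j-(\sigma_j^2/\sigma_1^2)X_1$ is \emph{decreasing} in $X_1$, not increasing, though your key identity $X_1>X_j\iff X_1>\barmu$ is verified correctly and the argument is unaffected.
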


The numerator of $p_{1j}$ in Theorem \ref{thm:winner} expresses the probability mass of $Z\sim\cN(0,1)$ above $\frac{x_1-\barmu}{\barsig}$. This is the test statistic of the one-sided $Z$-test with unequal variances:
 
\begin{equation}\label{eq:Z-test}
    \frac{x_1-\barmu}{\barsig} = \frac{x_1 \left(\frac{\sigma_1^2+\sigma_j^2}{\sigma_1^2+\sigma_j^2}\right) - \frac{\sigma_j^2 x_1 + \sigma_1^2 x_j}{\sigma_1^2 + \sigma_j^2}}{\frac{\sigma_1^2}{\sqrt{\sigma_1^2+\sigma_j^2}}} = \frac{\sigma_1^2 x_1 - \sigma_1^2 x_j}{\sigma_1^2\sqrt{\sigma_1^2+\sigma_j^2}} = \frac{ x_1 - x_j}{\sqrt{\sigma_1^2+\sigma_j^2}}.
\end{equation}

 Next, consider the setting in which $\barmu > \max_{k\neq 1,j} x_k$. 
 Then the denominator of $p_{1j}$ integrates the standard normal from its mean, $0$. 
 By symmetry, this is also the median, so the integral equals $\frac{1}{2}$. 
 As a result, $p_{1j}$ is significant whenever the $Z$-statistic in the numerator exceeds the $1-\alpha/2$ quantile. 
This reduction occurs for $j=2$, testing the winner against the runner-up:

$$\bar\mu_{12} = \frac{\sigma_2^2 x_1 + \sigma_1^2 x_2}{\sigma_1^2+\sigma_2^2} > \frac{\sigma_2^2 x_2 + \sigma_1^2 x_2}{\sigma_1^2+\sigma_2^2} = x_2 > \max_{k>2} x_k.$$

In practice, $j = 2$ often has the largest p-value amongst all $p_{1j}$. 
Intuitively, Equation~\eqref{eq:Z-test} highlights the inverse relation between the distance $x_1-x_j$ and the numerator of $p_{1j}$; $x_2$ is closest to $x_1$, so its p-values are likely to be larger. 
Indeed, \citet{HF} showed $p_1^*=p_{12}$ in the special case where all variances are equal. 
Our test in Theorem \ref{thm:winner} can thus be understood as generalizing the equal-variance level $\alpha/2$ $Z$-test between the winner and runner-up.

Nevertheless, the runner-up does not necessarily have the largest p-value.
In fact, $p_{1j}$ is monotonically increasing in the variance $\sigma_j^2$. 
This protects against situations like the one described in Figure \ref{fig:unequal_vars}, in which a lower observed rank has larger variance than the runner-up. 

\subsection{Proof of Theorem \ref{thm:winner}}

Verifying the winner entails demonstrating it is sufficiently unlikely to have won by its margin if its mean were \textit{not} the highest. The procedure we construct to demonstrate needs to control the type I error rate --- the probability of falsely verifying the wrong winner. 
Letting $j^*$ be the index of the highest mean, this is

$$\P(\text{Type I error}) =\P(\text{falsely verify winner}) = \P(\bigcup_{j\neq j^*} \text{$j$ wins and is verified} ).$$
We can upper bound its error rate with a union bound:

\begin{align*}
    \P(\bigcup_{j\neq j^*} \text{$j$ wins and is verified } ) \leq 
\sum_{j\neq j^*} \P(\text{$j$ wins and is verified } ) \\
= \sum_{j\neq j^*} \P(\text{verify $j$ }\vert \text{$j$ wins})\P(\text{$j$ wins}).
\end{align*}

$\P(j\text{ wins})$ is nonnegative and sums to $1$. Consequently, a procedure that ensures \mbox{$\P(\text{verify $j$ }\vert \text{$j$ wins})\leq\alpha$} for all $j\neq j^*$ controls the type I error rate at level $\alpha$. We will achieve this with a valid level-$\alpha$ test of \mbox{$H_0:\{\mu_1 \leq \max_{j>1} \mu_j \ \vert \ X_1\ \text{wins}\}$}.

To test this null, we employ the following lemma, from \citet{Berger1982}.
\begin{lemma}\label{lemma:union}
    Let $p_{1j}$ be valid p-values for null hypotheses $H_{01j}$. Then $p^* = \max_j p_{1j}$ is a valid p-value for the union null $H_0 = \bigcup H_{01j}$.
\end{lemma}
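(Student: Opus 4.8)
The plan is to argue directly from the definition of validity. Recall that a statistic $p$ is a valid p-value for a null hypothesis $H$ precisely when it is superuniform under every distribution in $H$; that is, $\P_\theta(p \leq \alpha) \leq \alpha$ for all $\alpha \in [0,1]$ and all parameters $\theta \in H$. My goal is therefore to establish this superuniformity property for $p^* = \max_j p_{1j}$ over the entire union null $H_0 = \bigcup_j H_{01j}$.

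First I would fix an arbitrary $\theta \in H_0$ and an arbitrary level $\alpha \in [0,1]$. By the definition of the union, there exists at least one index $j_0$ (possibly depending on $\theta$) such that $\theta \in H_{01j_0}$. The key observation is purely order-theoretic: since $p^*$ is the maximum over $j$, we have $p^* \geq p_{1j_0}$ pointwise, so the event $\{p^* \leq \alpha\}$ is contained in the event $\{p_{1j_0} \leq \alpha\}$. Monotonicity of probability then yields $\P_\theta(p^* \leq \alpha) \leq \P_\theta(p_{1j_0} \leq \alpha)$, and because $p_{1j_0}$ is a valid p-value for $H_{01j_0}$ while $\theta \in H_{01j_0}$, the right-hand side is at most $\alpha$. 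Chaining these gives $\P_\theta(p^* \leq \alpha) \leq \alpha$.

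Since $\theta \in H_0$ and $\alpha \in [0,1]$ were both arbitrary, this shows $p^*$ is superuniform throughout the union null, which is exactly the claim that $p^*$ is a valid p-value for $H_0$.

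I expect no real obstacle here, as the argument is a one-line event-inclusion bound. The only point requiring care is that the ``witnessing'' index $j_0$ may vary with $\theta$; this causes no difficulty because the bound is proved separately for each fixed $\theta$, invoking only the validity of the single component $p_{1j_0}$ whose null $\theta$ actually satisfies. In particular, we never need all the $p_{1j}$ to be simultaneously valid at a common $\theta$, nor any independence or joint distributional assumptions among them.
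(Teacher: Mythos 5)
Your proof is correct and takes essentially the same approach as the paper: both arguments pick a true component null (the paper's ``without loss of generality, assume $H_{01}$ is true'' is your witnessing index $j_0$), use the pointwise inequality $p^* \geq p_{1j_0}$ to obtain the event inclusion $\{p^* \leq \alpha\} \subseteq \{p_{1j_0} \leq \alpha\}$, and then invoke the validity of that single component p-value. Your explicit remark that $j_0$ may depend on $\theta$ and that no joint or independence assumptions are needed is a slight sharpening of the paper's phrasing, but the argument is the same.
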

\begin{proof}[Proof of Lemma \ref{lemma:union}]
When $H_0$ is true, at least one of the nulls $H_{01j}$ is true. Without loss of generality, assume $H_{01}$ is true. By construction, $p^* \geq p_1$, so $\P(p^* \leq \alpha) \leq \P(p_1 \leq \alpha)$. $p_1$ is defined to be a valid p-value, so $\P(p_1 \leq \alpha) \leq \alpha$. This implies the validity of $p^*$.
\end{proof}

In our setting, $H_0$ can be expressed as a union null, via 
\begin{equation}
    H_0: \bigcup_{j>1} \underbrace{\left\{\mu_1 \leq \mu_j \ \vert \ X_1 > \max_{k>1} X_k \right\}}_{H_{01j}}.
\end{equation}
Therefore by Lemma \ref{lemma:union}, a valid p-value for $H_0$ is the maximum p-value $p_{1j}$ testing $H_{01j}$ over all $j>1$. The remainder of this proof concerns constructing these valid p-values, employing a similar strategy as \citet{HF}. 

If $X_1$ were not the winner, these tests would not be conducted in the first place.
However, classical statistical methods fail when the tests to conduct are selected \textit{after} examining the data.
Instead, the test must condition on $X_1$ being the largest, per the selective inference framework of \citet{fithian_sun_taylor}.

Let $A_1$ denote this selection event, and let $\phi$ be the multivariate normal density. The joint distribution of $X$ is

$$X\vert A_1 \propto \phi(X)\mathds{1}_{A_1},$$
 
\noindent where $\propto$ indicates the density is proportional in $X$ up to a constant normalizing factor. We proceed by conditioning on more variables in addition to $A_1$. 
The ensuing test of $H_{01j}$ will be valid conditional on both $A_1$ and any possible realization of these extra variables.
As a result, it is still valid after marginalizing them out so only $A_1$ remains --- the test we need to be valid.

In particular, we first condition on all observed values besides $1$ and $j$. By independence, 
\begin{align}\label{eq:exp_form}
    X_{1,j}\vert A_1, X_{k\neq 1, j} &\propto \phi(X_1, X_2)\mathds{1}_{A_1}\nonumber\\
    &\propto \exp\left[{-\frac{1}{2\sigma_1^2}(X_1-\mu_1)^2-\frac{1}{2\sigma_2^2}(X_2-\mu_2)^2}\right]\mathds{1}_{A_1}\nonumber\\
    &\propto \exp\left[\frac{X_1\mu_1}{\sigma_1^2}+\frac{X_2\mu_2}{\sigma_1^2}-\frac{X_1^2}{2\sigma_1^2}-\frac{X_2^2}{2\sigma_2^2}\right]\mathds{1}_{A_1}\nonumber\\
    &= \exp
        \left[
        \underbrace{\frac{X_1}{\sigma_1^2}}_{T(X)}
        \underbrace{(\mu_1-\mu_j)}_{\theta}
        +\underbrace{\left(\frac{X_1}{\sigma_1^2}
        +\frac{X_j}{\sigma_j^2}\right)}_{U(X)}
        \underbrace{\mu_j}_{\lambda} 
        -\frac{X_1^2}{2\sigma_1^2}
        -\frac{X_j^2}{2\sigma_j^2}
        \right]
        \textbf{1}_{A_1}.
\end{align}

Under the null hypothesis, $\mu_1-\mu_j\leq 0$.
In Equation~\eqref{eq:exp_form}, $\mu_1-\mu_j$ appears in $\theta$, divided by known constant $\sigma_1^2$. 
However, we cannot immediately conduct inference on $\theta$ via its sufficient statistic $T(X)$. This is due to the term involving the nuisance parameter $\lambda=\mu_j$.
To evade its influence, we condition on the statistic \mbox{$U(X)=\frac{X_1}{\sigma_1^2}+\frac{X_j}{\sigma_j^2}$} taking the realized value, \mbox{$u=\frac{x_1}{\sigma_1^2}+\frac{x_j}{\sigma_j^2}$}. 
Upon doing so, the nuisance parameter factors out of the exponential, thereby isolating $\theta$:

\begin{equation}\label{eq:nuisance_condition}
    X_{1}\vert \{A_1, X_{k\neq 1, j}, U\} \propto \exp\left[\frac{X_1}{\sigma_1^2}(\mu_1-\mu_j) -\frac{X_1^2}{2\sigma_1^2} -\frac{X_j^2}{2\sigma_j^2}\right]\textbf{1}_{A_1}.
\end{equation}

The most powerful level-$\alpha$ test is likeliest to reject when the null is false. This is attained at the boundary case $\mu_1=\mu_j$, equivalently where $\theta=0$. Performing the null reduction simplifies Equation~\eqref{eq:nuisance_condition} to

\begin{equation}\label{eq:null_reduction}
    X_1, X_j\vert \{A_1, X_{k\neq 1, j}, U\} \propto \exp\left[-\frac{X_1^2}{2\sigma_1^2} -\frac{X_j^2}{2\sigma_j^2}\right]\textbf{1}_{A_1}.
\end{equation}

Note the condition $U(X)=u$ can be rearranged as \mbox{$X_j = \sigma_j^2(u-\frac{X_1}{\sigma_1^2})$}. As a result, the density in Equation~\eqref{eq:null_reduction} may be posed solely in terms of $X_1$.
In fact, completing the square in the exponential reveals that the conditional distribution of $X_1$ is a truncated normal.

\begin{align}\label{eq:complete_square}
    X_{1}\vert \{A_1, X_{k\neq 1, j}, U\} &\propto \exp\left[{-\frac{X_1^2}{2\sigma_1^2} -\frac{\left(\sigma_j^2(u-\frac{X_1}{\sigma_1^2}\right)^2}{2\sigma_j^2}}\right]\textbf{1}_{A_1}\nonumber\\
    &\propto \exp\left[{ -\frac{X_1^2}{2\sigma_1^2} -\frac{\sigma_j^2 X_1^2}{2\sigma_1^4} + \frac{\sigma_j^2 u X_1}{\sigma_1^2}}\right]\textbf{1}_{A_1}\nonumber\\
    &\propto \exp\left[ \left(-\frac{\sigma_1^2+\sigma_j^2}{2\sigma_1^4}\right)X_1^2 + \left(\frac{\sigma_j^2 u }{\sigma_1^2}\right)X_1\right]\textbf{1}_{A_1}\nonumber\\
    &\propto \exp\left[-\frac{\sigma_1^2+\sigma_j^2}{2\sigma_1^4}\left(X_1^2 - \frac{2\sigma_1^2\sigma_j^2 u }{\sigma_1^2+\sigma_j^2} X_1\right) \right]\textbf{1}_{A_1}\nonumber\\
    &\propto \exp\left[-\frac{\sigma_1^2+\sigma_j^2}{2\sigma_1^4}\left(X_1 - \frac{\sigma_1^2\sigma_j^2 u }{\sigma_1^2+\sigma_j^2}\right)^2\right]\textbf{1}_{A_1}.
\end{align}


Analyzing Equation~\eqref{eq:complete_square}, the truncated normal has mean and variance parameters $\barmu = \frac{\sigma_j^2 x_1 + \sigma_1^2 x_j}{\sigma_1^2 + \sigma_j^2}$ and $\barsig^2 = \frac{\sigma_1^4}{\sigma_1^2+\sigma_j^2}$. 
The truncation event $A_1$ may be expressed using the conditioned variables $X_{k}=x_{k} \;\forall k\neq 1, j$ and $U(X)=u$. 



\begin{align}\label{eq:selection}
    A_1 &= \{X_1 \text{ wins}\}\nonumber \\
        &= \{X_1 > X_j \;\bigcap\; X_1 > \max_{k\neq 1, j} X_k \}\nonumber \\
    &= \{X_1 > \sigma_j^2\left(u-\frac{X_1}{\sigma_1^2}\right)\;\bigcap\; X_1 > \max_{k\neq 1, j} x_k \}\nonumber \\
    &= \{ X_1 > \frac{\sigma_1^2 \sigma_j^2 u}{\sigma_1^2 + \sigma_j^2}\;\bigcap\; X_1 > \max_{k\neq 1, j} x_k\}\nonumber \\
    &= \{X_1 > \underbrace{\max( \barmu, \max_{k\neq 1, j} x_k )}_{\bareta}\}
\end{align}

The p-value $p_{1j}$ is the tail mass of this truncated normal distribution. The tail is the values of $X_1$ greater than the observed value, $x_1$. 
Figure \ref{fig:segments} displays how the p-value relates to the truncation event. 
To compute $p_{1j}$, take the tail mass of the normal distribution, divided by the total mass after truncation event. 
Expressing in terms of $\barmu, \barsig$, and $\bareta$, this is

\begin{equation}\label{eq:pval}
    p_{1j} = \frac{1-\Phi(\frac{x_1 - \barmu}{\barsig})}{1-\Phi(\frac{\bareta - \barmu}{\barsig})}.
\end{equation}

Finally, use $p_1^* = \max_j p_{1j}$ as the p-value. When it rejects at level $\alpha$, the winner is verified as the best. 

$\hfill\square$

\begin{figure}[H]
     \centering
     \begin{subfigure}[b]{0.49\linewidth} 
         \centering
         \includegraphics[width=\linewidth]{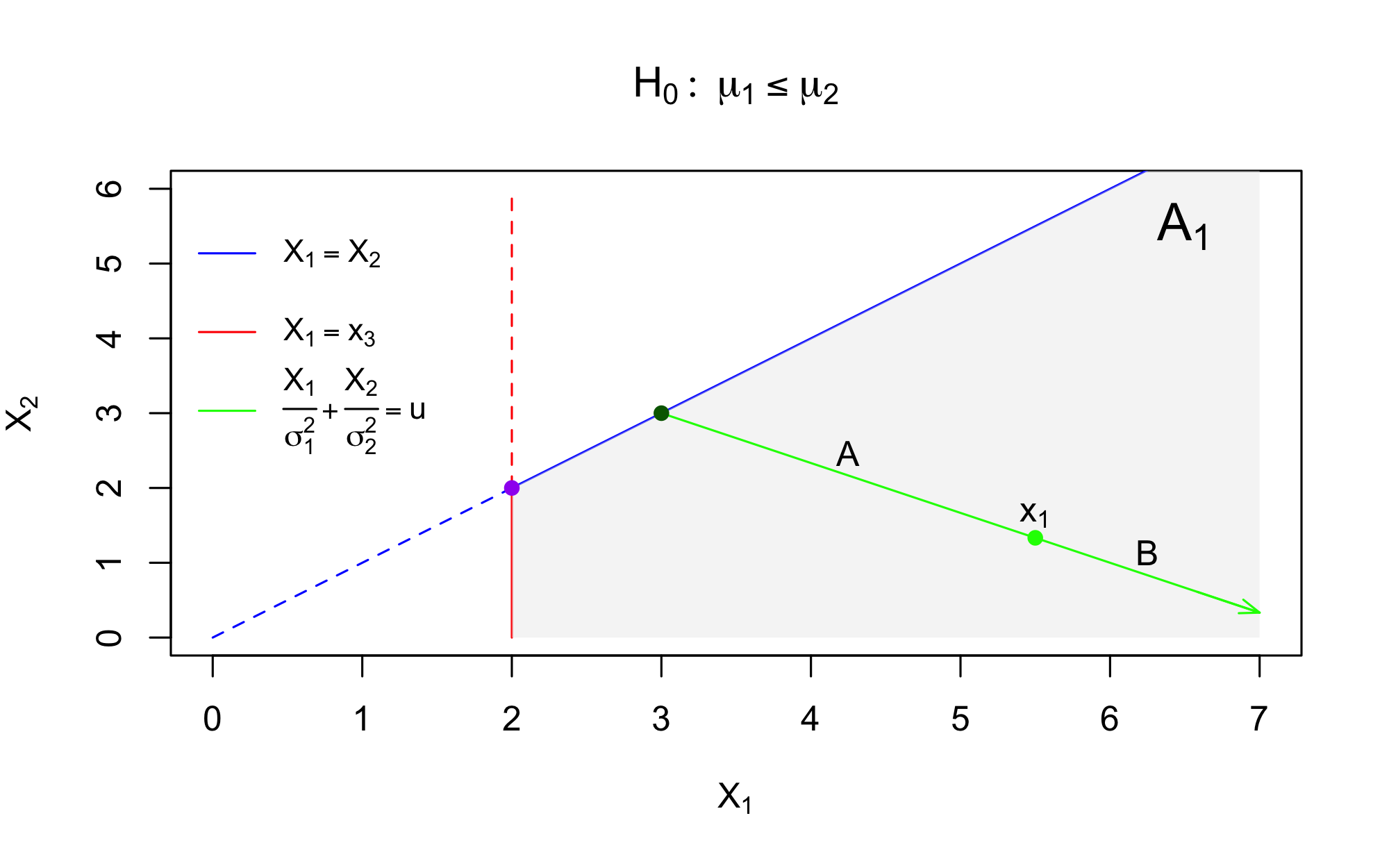}
         \caption{Testing winner against runner-up.}
     \end{subfigure}
     \hfill 
     \begin{subfigure}[b]{0.49\linewidth}
         \centering
         \includegraphics[width=\linewidth]{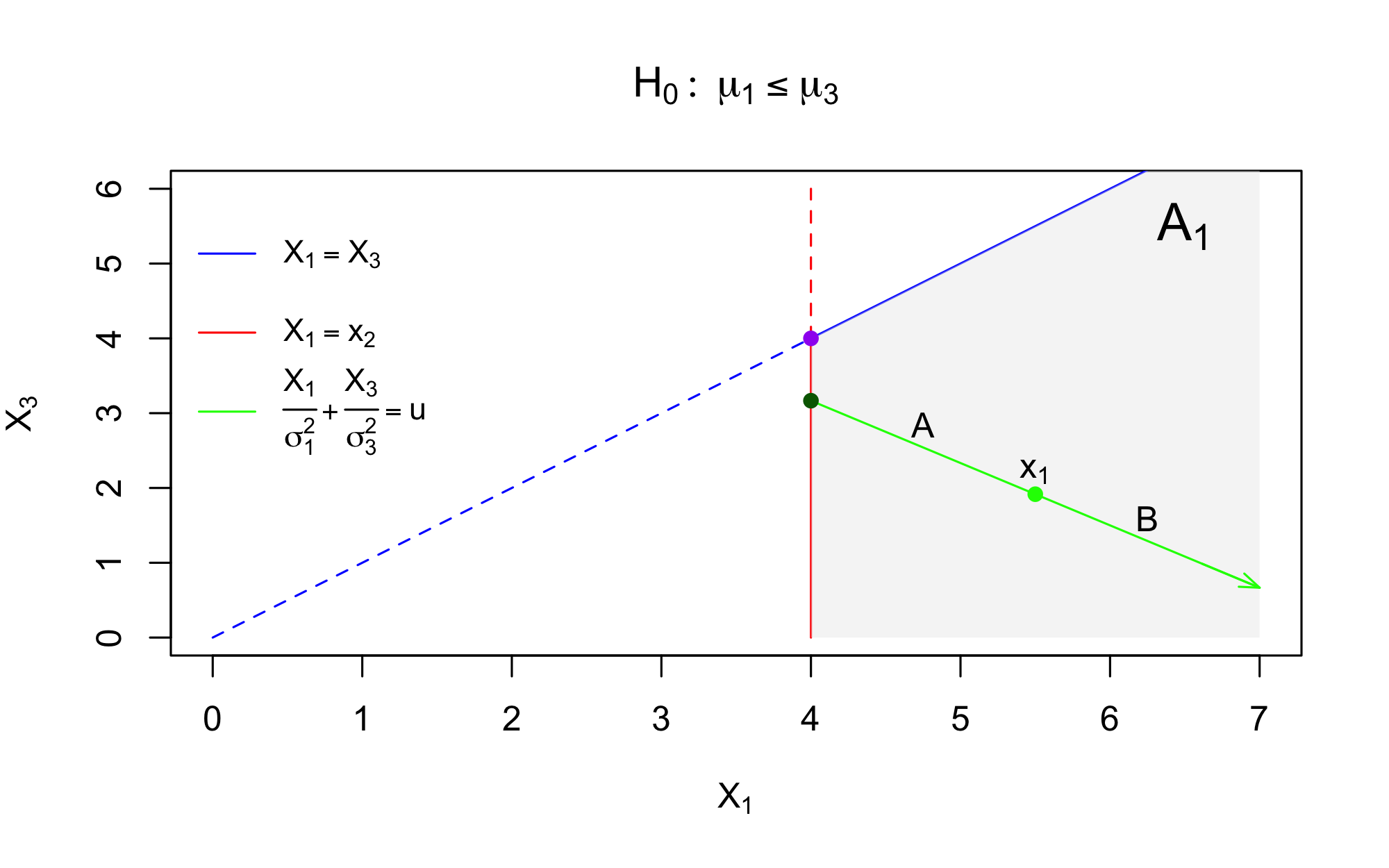}
         \caption{Testing winner against third place.}
     \end{subfigure}
    \caption{Tests to verify winner. Shaded region depicts truncation event; p-value is mass of segments $B/(A+B)$.}
    \label{fig:segments}
\end{figure}

\paragraph{Commentary and Modification.}

It is no coincidence that the post-selection distribution is a truncated normal.
This is the distribution of any linear contrast of Gaussian data conditioned on a polyhedral selection event \citep{Tibshirani02042016, Lee_2016, PNAS}. 
In our setting, selection is polyhedral: $X_1$ winning can be expressed with the linear event that $X_1 - X_j > 0$ for all $j>1$. Indeed, the same test can be equivalently derived via the Polyhedral Lemma of \citet{Tibshirani02042016}. 

This finding has been applied to perform post-selection inference in a variety of contexts.
These include analyzing coefficients of stepwise regression \citep{Tibshirani02042016} and the Lasso \citep{Lee_2016}; rank estimation via principal components analysis \citep{choiPCA}; changepoint detection \citep{Hyun2021}; and many others.
Perhaps the closest application to ours is \citet{reid2015}, which utilizes it for top-$K$ mean estimation.

There may be settings in which it is of interest to validate the lowest rank, as opposed to the highest.  This changes the test in Theorem \ref{thm:winner}, since we now are interested in the lower tail of the tested variable. 
Let $1$ here indicate the lowest order statistic, and define $\gamma_j=\min(\barmu, \min_{k\neq 1,j} x_k)$. 
The equation for the p-value is is changed from Equation~\eqref{eq:pval} to

\begin{equation}\label{eq:pval_low}
    \tilde{p}_{1j} = \frac{\Phi(\frac{x_1 - \barmu}{\barsig})}{\Phi(\frac{\gamma_j - \barmu}{\barsig})}.
\end{equation}

\section{Top-$K$ Methods}\label{sec:procedures}

We apply Theorem \ref{thm:winner} in the service of verifying the top $K$ ranks.

\subsection{Ordered Ranking}\label{sec:ranks}

Of interest may be the ranking of the top $K$ means, rather than just the first. Having observed the ordering $X_1 > X_2 > \ldots > X_d$, the aim is to identify some number of ranks $K\geq 0$ such that with probability exceeding $1-\alpha$,

$$\mu_1 > \mu_2 > \ldots > \mu_K > \max_{j>K} \mu_j.$$

\noindent\textbf{Procedure 1}. Starting with the first order statistic, conduct the level-$\alpha$ verification test from Theorem \ref{thm:winner}. When the test rejects, reiterate with the subsequent order statistic, testing it against all the ranks it exceeds. As soon as a test fails to reject, return the number of rejected tests $K\geq 0$.

\begin{corollary}\label{cor:topKranks}
    Procedure 1 is a valid level-$\alpha$ procedure. The family-wise error rate (FWER) --- the probability that one or more of its top-$K$ ranks is out of order --- is at most $\alpha$. 
\end{corollary}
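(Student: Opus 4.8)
The plan is to reduce the FWER control of the sequential procedure to the single-step guarantee already established in Theorem 1, using a standard argument for sequential/fixed-sequence testing. First I would formalize what ``one or more of the top-$K$ ranks is out of order'' means as an event. Procedure 1 returns $K$ when it rejects the verification tests at order statistics $1,2,\ldots,K$ and fails at step $K+1$. A rank error occurs precisely when the returned ordering $\mu_1 > \mu_2 > \cdots > \mu_K > \max_{j>K}\mu_j$ fails to hold while the procedure has nonetheless reported it, i.e. when some reported strict inequality is actually violated in the means.

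The key observation is that at step $m$, the procedure applies Theorem 1 to the subproblem consisting of the random variables $X_m, X_{m+1},\ldots,X_d$ (those ranks $X_m$ exceeds), testing whether $X_m$ is the ``best'' among this restricted set, i.e. whether $\mu_m > \max_{j>m}\mu_j$ conditional on $X_m$ being the winner of that subset. By Theorem 1, each such test is a valid level-$\alpha$ test of its own null $H_0^{(m)}:\{\mu_m \le \max_{j>m}\mu_j\}$, controlling the probability of falsely verifying at that step by $\alpha$. So I would invoke Theorem 1 as a black box to assert that each individual step has type I error at most $\alpha$.

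The core of the argument is then the first-failure structure of a fixed-sequence procedure. The plan is to let $m^\star$ be the \emph{first} step at which the true ordering actually breaks, i.e. the smallest $m$ with $\mu_m \le \max_{j>m}\mu_j$ (if no such $m$ exists, the full ordering is correct and no error is possible). A family-wise error — reporting at least one incorrect rank — can only occur if the procedure rejects \emph{through} step $m^\star$, which in particular requires rejecting at step $m^\star$ itself. But step $m^\star$ is, by definition of $m^\star$, a step whose null $H_0^{(m^\star)}$ is true; for every earlier step $m<m^\star$ the ordering is correct, so no false rejection has occurred there. Hence
\[
\P(\text{FWER}) \le \P(\text{reject at step } m^\star) \le \alpha,
\]
where the last inequality is exactly the level-$\alpha$ validity of the step-$m^\star$ test from Theorem 1. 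Because the procedure stops at the first non-rejection, only this single true null must be controlled, so no multiple-testing correction is needed and the level is preserved.

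The main obstacle I anticipate is the measurability/conditioning bookkeeping: the index $m^\star$ is deterministic given the means but the procedure's stopping is data-dependent, and Theorem 1's guarantee is stated conditionally on which variable wins. I would need to argue carefully that the validity of the step-$m^\star$ test holds unconditionally after marginalizing over which earlier steps rejected — this is precisely the selective-inference marginalization already used in the proof of Theorem 1, where conditioning on the selection event $A_1$ and then integrating out the nuisance conditioning variables preserves validity. The slightly delicate point is confirming that the events ``reject at steps $1,\ldots,m^\star-1$'' do not disturb the conditional null distribution used at step $m^\star$; since each step's test conditions only on its own winner-selection event among its own subset, and the nulls at the earlier steps are all true by minimality of $m^\star$, the bound $\P(\text{reject at step }m^\star)\le\alpha$ survives. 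I would spell this out rather than treat it as automatic.
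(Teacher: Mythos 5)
Your high-level skeleton --- stop at the first non-rejection, and bound the FWER by the probability of rejecting at the first step whose null is true --- is the same fixed-sequence idea the paper leans on (via Section 5 of \citet{HF} and the classical result of \citet{marcus1976}). But there is a genuine gap at the pivotal step, and it is precisely the issue the paper's proof exists to handle. You assert that ``the index $m^\star$ is deterministic given the means.'' It is not. In this paper the indices are \emph{observed ranks}: $\mu_m$ denotes the mean of whichever variable happened to land in $m$-th place, so the step-$m$ null, and hence your $m^\star$, are random objects. (In the example of Figure \ref{fig:unequal_vars}, the step-1 null is true exactly when the high-variance variable is drawn below the others --- a data-dependent event.) Consequently your final inequality $\P(\text{reject at step } m^\star) \le \alpha$ does not follow from Theorem \ref{thm:winner}: decomposing the event as a disjoint union over $\{m^\star = m\}$, you would need the step-$m$ test to be valid \emph{conditionally on} $\{m^\star = m\}$, on having reached step $m$, and on the identity of the random subcollection tested at step $m$ --- all of which involve conditioning on more than the selection event under which Theorem \ref{thm:winner}'s p-value is constructed. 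Your closing claim that the bound ``survives'' because ``the nulls at the earlier steps are all true by minimality of $m^\star$'' is also backwards (by minimality the earlier nulls are false), and in any case it does not supply the missing conditional validity; you flagged the right obstacle but then dismissed it rather than resolving it.

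This is exactly why the paper does not simply recycle Theorem \ref{thm:winner}'s p-values step by step in its proof. Following \citet{HF}, it works with modified p-values that additionally condition on the ordering event that $X_k$ is less than $X_{k-1}$, and that test the \emph{wider} null that some element of the top-$k$ is ordered improperly. Under this extra conditioning the nulls become fixed and nested --- every null after the first true one is also true --- so the fixed-sequence argument of \citet{marcus1976} applies; the adaptation needed for random nested nulls is the contribution of \citet{will_ryan} that the paper cites, where the conditioning is what ``fixes'' each null. Your proposal identifies the correct classical skeleton but omits this conditioning device, which is the actual mathematical content of the corollary.
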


\begin{proof}
    Corollary \ref{cor:topKranks} holds by an identical argument as Section 5 of \citet{HF}. Rather than recreate this in its entirety, we briefly summarize its main argument. For greater detail, we refer the reader to the original manuscript.

    The $k^\text{th}$ test in Procedure 1 takes the maximum p-value $p_{kj}$, comparing $X_k$ to every lower-ranked $X_j$. 
    The proof identifies more liberal p-values than $p_{kj}$ which nevertheless control FWER.
    In addition to the selection event~\eqref{eq:selection}, these p-values also condition on the event that $X_k$ is less than $X_{k-1}$. 
    They are designed to test a wider null hypothesis: That some element(s) of the top-$k$ are ordered improperly.

    These hypotheses are nested, such that all nulls following the first true null must also be true. Given fixed, nested null hypotheses, stopping at the first failure to reject is a valid level-$\alpha$ procedure. This is a classical result dating back to \citet{marcus1976}. More recently, \citet{will_ryan} showed it can be adapted for random nested nulls by adding conditions to fix each null. In the context of rank verification, this is achieved by conditioning on $X_k$ being less than $X_{k-1}$. 
\end{proof}

As noted previously, one may hope to validate the $K$ lowest ranks. 
To do so, modify Procedure 2 to validate whether a given rank is \textit{lower} than all preceding ranks, iterating from the lowest rank. Each test rejects if the p-value from Equation~\eqref{eq:pval_low} is below $\alpha$.

\subsection{Top-$K$ Set}

It may be of practical interest to identify the $K$ highest-ranking elements, without necessarily establishing their ranking within that set. In this section, we introduce a simple testing procedure to do so, which is generally less stringent than the test in Section \ref{sec:ranks}. 

\bigskip

\noindent\textbf{Procedure 2}. 
For each $k$ in 1 to $K$, test element $k$ versus all $d-K$ elements after $K$ with Theorem \ref{thm:winner}. If all $K$ tests reject, then declare the top-$K$ observed set is correct with probability at least $1-\alpha$.

\begin{corollary}\label{cor:topKset}
    Procedure 2 is a valid level-$\alpha$ procedure. The probability that all tests reject yet the top-$K$ observed set is incorrect is at most $\alpha$.
\end{corollary}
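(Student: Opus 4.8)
The plan is to mirror the error decomposition used for Theorem~\ref{thm:winner}, replacing ``which variable wins'' by ``which set is observed as the top $K$,'' so that the conditioning events again partition the sample space and the per-event error is controlled at level $\alpha$.

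First I would recast each of the $K$ tests as an instance of Theorem~\ref{thm:winner}. Because the data are sorted, $X_k > X_{K+1} > \cdots > X_d$ for every $k \le K$, so $X_k$ is the winner of the subcollection $\{X_k, X_{K+1}, \ldots, X_d\}$. Applying Theorem~\ref{thm:winner} to this subcollection furnishes a valid level-$\alpha$ test of the null $H_0^{(k)}: \mu_k \le \max_{j>K}\mu_j$, valid conditional on the selection event that $X_k$ beats every element after $K$. Let $\mathcal{T}$ denote the (deterministic) set of the $K$ indices with the largest means and let $S=\{1,\ldots,K\}$ denote the observed top-$K$ set. Assuming distinct means, the observed set is incorrect exactly when $S \ne \mathcal{T}$, which is equivalent to $\min_{k\le K}\mu_k \le \max_{j>K}\mu_j$, i.e.\ to at least one tested null $H_0^{(k)}$ being true.

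Next I would condition on the realized identity of the observed top-$K$ set. The events $\{S=s\}$, ranging over candidate sets $s$, partition the sample space, so
$$\P(\text{all reject},\, S\ne\mathcal{T}) = \sum_{s\ne\mathcal{T}} \P(\text{all reject}\mid S=s)\,\P(S=s).$$
For a fixed $s \ne \mathcal{T}$ there is an intruder $m \in s\setminus\mathcal{T}$; since $m\notin\mathcal{T}$ at least $K$ indices have mean exceeding $\mu_m$, and as $s$ contains only $K-1$ indices besides $m$, at least one such index lies in $s^c$, so $\max_{j\in s^c}\mu_j > \mu_m$ and the null $H_0^{(m)}$ is true. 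Because ``all reject'' implies the test for $m$ rejects, it suffices to establish $\P(\text{test for } m \text{ rejects}\mid S=s)\le\alpha$, which then yields $\P(\text{all reject}, S\ne\mathcal{T})\le \alpha\sum_{s}\P(S=s)\le\alpha$.

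The main obstacle, and the step I would treat most carefully, is showing that the extra conditioning on $\{S=s\}$ does not invalidate the individual Theorem~\ref{thm:winner} test for $m$, which was designed to be valid only conditional on $X_m$ winning its own subcollection $\{m\}\cup s^c$. The resolution is a selective-inference compatibility check: fix the values of all variables except $X_m$. The test for $m$ ignores the indices in $s\setminus\{m\}$ entirely, since they lie outside its subcollection, so it remains valid conditional on their realized values. Given those fixed values, on configurations consistent with $s\setminus\{m\}$ all exceeding $s^c$, the event $\{S=s\}$ reduces to the single constraint $\{X_m > \max_{j\in s^c}X_j\}$, which is precisely the selection event on which Theorem~\ref{thm:winner}'s test for $m$ is valid. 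Hence under the true null $H_0^{(m)}$ the conditional rejection probability is at most $\alpha$, and marginalizing over the fixed values gives $\P(\text{test for }m\text{ rejects}\mid S=s)\le\alpha$, completing the bound. I would close by noting that this argument collapses to Theorem~\ref{thm:winner} exactly when $K=1$, where the events $\{S=s\}$ coincide with the disjoint ``$i$ wins'' events.
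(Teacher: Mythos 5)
Your proposal takes a genuinely different route from the paper --- the paper never conditions on the full set identity $\{S=s\}$; it writes the null as a double union of conditional hypotheses, each conditioned only on its own single-element selection event $\{X_j > \max_{\ell>K}X_\ell\}$, and combines them with two applications of Lemma~\ref{lemma:union}. Your route founders at the ``compatibility check,'' which is not merely delicate but false. To make $\{S=s\}$ collapse to the single constraint $\{X_m > \max_{j\in s^c}X_j\}$ you must fix the realized values of the variables in $s^c$; but Theorem~\ref{thm:winner}'s test is \emph{not} valid conditional on those values. Its construction conditions on $U_{mk}=X_m/\sigma_m^2+X_k/\sigma_k^2$ rather than on $X_k$ itself, precisely to eliminate the nuisance mean (Equation~\eqref{eq:nuisance_condition}); conditioning on $X_k=x_k$ re-introduces the unknown $\mu_m$ into the conditional law and destroys the calibration. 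If instead you fix only the companions in $s\setminus\{m\}$ (legitimate, by independence), then $\{S=s\}$ does \emph{not} reduce to the selection event: it retains the random constraint $\max_{j\in s^c}X_j < \min_{i\in s\setminus\{m\}}x_i$, and under the test's own conditioning ($U_{mk}=u$ fixed) this is equivalent to a further \emph{lower} truncation of $X_m$ (since $X_k=\sigma_k^2(u-X_m/\sigma_m^2)$ decreases in $X_m$), strictly above the threshold $\bareta$ at which $p_{mk}$ was calibrated. A p-value calibrated at truncation $\bareta$ is anti-conservative under any law truncated at a higher point.

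This failure is concrete. Take $d=3$, $K=2$, with $X_m,X_i,X_k$ i.i.d.\ $\cN(0,1)$ (perturb means by $\epsilon$ to make the true top-$2$ set and the intruder well-defined). The test of $m$ versus $k$ is the pairwise test $p_{mk}=2\left(1-\Phi\left((X_m-X_k)/\sqrt{2}\right)\right)$, which is exactly valid given its own selection event: $\P\left(p_{mk}\le\alpha \mid X_m>X_k\right)=\alpha$. But conditional on the event your decomposition requires, $\{S=\{m,i\}\}=\{X_m>X_k,\,X_i>X_k\}$, writing $Z_1=(X_m-X_k)/\sqrt{2}$ and $Z_2=(X_i-X_k)/\sqrt{2}$ (correlation $1/2$),
\begin{equation*}
\P\left(p_{mk}\le \alpha \,\middle|\, S=\{m,i\}\right)
= \frac{\P\left(Z_1 \ge z_{1-\alpha/2},\, Z_2 > 0\right)}{\P\left(X_k < \min(X_m,X_i)\right)}
\approx \frac{0.022}{1/3} \approx 0.066 > 0.05 = \alpha,
\end{equation*}
because conditioning on $X_i>X_k$ tilts $X_k$ downward and inflates the rejection rate of the $m$-versus-$k$ comparison. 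So the intermediate claim your proof rests on, $\P(\text{test for } m \text{ rejects}\mid S=s)\le\alpha$, fails in general, and the bound $\P(\text{all reject},\,S\ne\mathcal{T})\le\alpha\sum_s\P(S=s)$ collapses with it. (This does not by itself falsify the corollary --- in the example above, additionally requiring the companion's test to reject brings the conditional probability back below $\alpha$ --- but it shows the error control cannot be established intruder-by-intruder conditional on $\{S=s\}$, which is exactly the conditioning the paper's proof avoids.)
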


\begin{proof}
    Procedure 2 tests the null hypothesis that some element of the top-$K$ set does not belong. This can be written as a union over $K$ null hypotheses:
    
\begin{align*}
    H_0:&\;\exists\  j \in \{1,\ldots,K\}\ \text{ s.t. }\ \mu_j \leq \max_{k > K} \mu_j \ \vert \ X_j\ \text{wins}\ \\
    \Longleftrightarrow \quad&\bigcup_{j=1}^K \underbrace{\left\{\mu_j \leq  \max_{k > K} \mu_k \ \vert \ X_j\ > \max_{\ell > K}X_\ell\ \right\}}_{H_{0j}}.
\end{align*}
    
By Lemma \ref{lemma:union}, a valid p-value to test this null is the maximal $p_j$ testing $H_{0j}$. 
Each null hypothesis $H_{0j}$ is itself a union null of the form

\begin{equation*}
        H_{0j}: \bigcup_{k=K+1}^d \underbrace{\left\{\mu_j \leq  \mu_k \ \vert \ X_j\ > \max_{\ell>K}X_\ell\ \right\}}_{H_{0jk}}.
\end{equation*}

By Theorem 1, a valid choice of $p_j$ is the maximal p-value $p_{jk}$ testing null $H_{0jk}$. 
Therefore, a valid p-value for $H_0$ is 
$$p^* = \max_{j \in [1\ :\ K]}\ \max_{k \in [K+1\ :\ d]} p_{jk}.$$

\end{proof}


\section{Empirical Study}

\subsection{Validation}

\begin{figure}
    \centering
    \includegraphics[width=\linewidth]{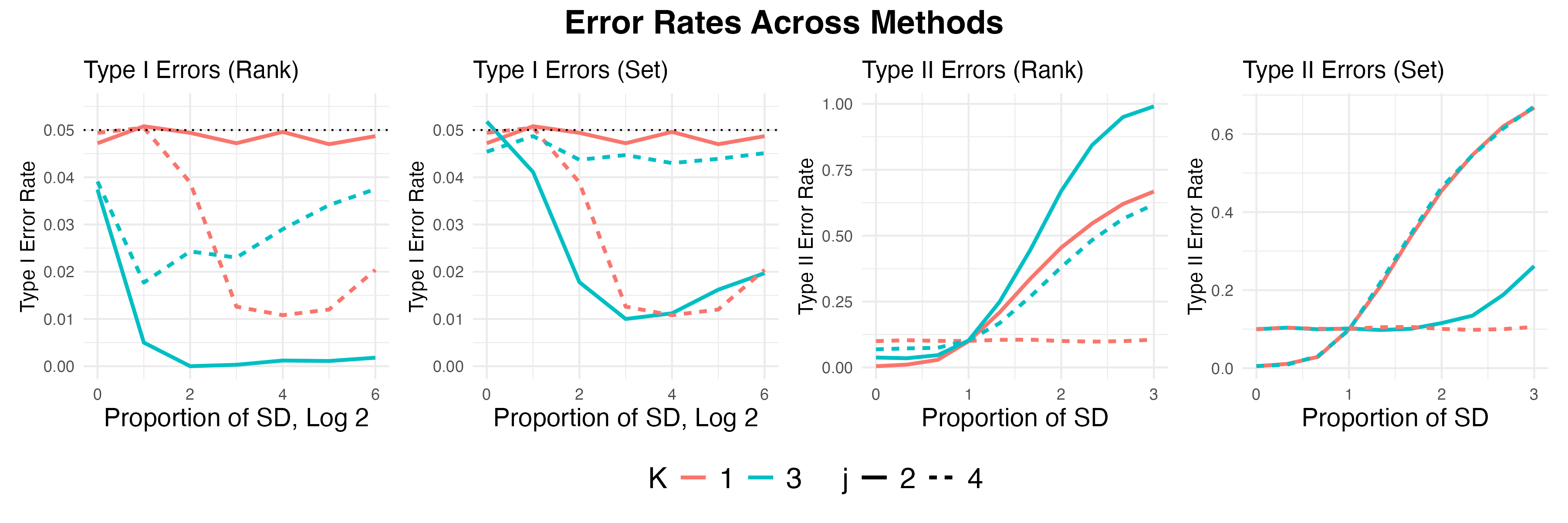}
    \caption{Type I and II errors of our top-$K$ ranking and set methods on simulated data. All variables have equal variance except one, whose standard deviation rises as a proportion of the others'.}
    \label{fig:sim}
\end{figure}

We demonstrated the validity of our top-$K$ ranking and set methods on simulated data. Our simulations contained 5 Gaussian random variables with a range of variances. 
Their means depended on whether we assessed type I or II errors.
To assess type II error -- when the observed top $K$ is in order, but is not verified -- the means were spaced evenly. 
To assess type I error, we let $\mu_K = \mu_{K+1}$; in this context there is no true top $K$, so the procedure would be mistaken to verify.

We identified the standard deviations $\bar\sigma$ for which each procedure had power 0.9 in the evenly-spaced setting. 
Among the simulated draws for which the top-$K$ variables are correct, power is calculated as the proportion in which for which the set or ranking tests reject.
Next, we simulated data for which all but one of the random variables had this default standard devation. 
The atypical random variable has population rank $j=2$ or 4.
We chose a number of values for its standard deviation $\sigma_j$. For our experiments analyzing type I error,  $\sigma_j$ grew exponentially from $\bar\sigma=2^0\bar\sigma$ to $2^6\bar\sigma$, with evenly-spaced exponents. The range on our type II experiments was more modest, and realistic. $\sigma_j$ ranged from 0, i.e. producing a constant value, to $3\bar\sigma$.

Figure \ref{fig:sim} displays the results from 10,000 draws. 
The leftward two panels show that the type I errors virtually never violate the specified threshold. This bolsters the theoretical validity of our procedures. 
However, the error rates are often quite close to $\alpha=0.05$, indicating our tests are well-calibrated and should thus have relatively high power. 
In addition, Figure \ref{fig:type_i_even} shows type I error in the evenly-spaced setting. Calculated as the proportion of rejections amongst misordered draws, these error rates are more conservative when $\sigma_j$ is small. This is because misorderings are generally closer, due to the gap between $\mu_K$ and $\mu_{K+1}$.

%

The type II error rate, equal to 1 minus the power, generally rises with $\sigma_j$. That our procedures are more cautious with large $\sigma_j$ makes intuitive sense, as it increases the likelihood that the observed top $K$ is incorrect. In the most extreme case, the ranking method rarely verifies the top 3 ranks when second place is high-variance, since both the first and third ranks could be misordered. In general, however, the power remains reasonable. A notable exception, in dotted red, is for $K=1$ when $\sigma_4$ is inflated. $\mu_4$ is far enough from $\mu_1$ that this has virtually no effect on the type II error rate.

\subsection{NHANES Application}\label{sec:nhanes}
We applied these testing methods to the National Health and Nutrition Examination Survey, detailed in Appendix B.
In particular, we studied health outcomes by education status, stratified into five groups (Table S1). 
Respondents reported their highest level of education as 8th grade, 9th-11th grade, high school graduate, some college, or college graduate.
The health outcomes we examined were log income, hours of sleep per night, and days of bad mental health per month. 
The number of samples in each group is large, ranging between roughly 450 and 2250. 
Therefore by the central limit theorem, the sample means within each group are asymptotically normal. 
Moreover, the sample variances are close to the theoretical variance, so we treat the $\sigma_j^2$ as known.

\begin{table}[h!]
\centering
\begin{tabular}{|c|c|c|c|}
\hline
\textbf{Outcome} & \textbf{Stable Top-$K$} & \textbf{Stable Bottom-$K$} & \textbf{Top-$3$ p-value} \\ \hline
Log Income & 5 & 5 &0.00 \\ \hline
Sleep Hours & 1 ($p_1^*=.055$) & 0 ($p_1^*=.954$) & 0.17 \\ \hline
Days Bad Mental Health & 0 ($p_1^*=.103$) & 1 ($p_1^*=.002$) & 0.38 \\ \hline
\end{tabular}
\caption{Results of ranking and set procedures, NHANES. Ranking procedures use $\alpha=0.1$, though p-values are also reported.}
\label{tab:nhanes-results}
\end{table}

We used our ranking procedures to validate how education relates to these health outcomes (Table \ref{tab:nhanes-results}). This requires the reasonable assumption of independence between groups. First, we validated the highest rankings with Procedure 1, at various levels of $\alpha$. For log income with any reasonable $\alpha$ (e.g. $0.01$ and above), our test rejects at all five ranks. This validates the observed, intuitive finding that average income increases with each step of educational attainment. 

Procedure 1 validates fewer ranks on the other outcomes. 
On average, college graduates have 0.15 more hours of sleep than the second-highest group.
This finding is relatively significant, rejecting at $\alpha=0.1$ but not $\alpha=0.05$.
None of subsequent top ranks reject.
Going in the opposite direction, the lowest ranks are also insignificant.
The fewest average hours of sleep are extremely close to one another (6.77 and 6.78).

People who finished college have, on average, the fewest days of bad mental health per month. 
This is by a nontrivial margin: 3.13, versus 4.20 for the second-lowest group.
The ranking procedure validates this finding ($p=0.002$), but no more bottom ranks.
People educated through 9-11th grade have the most days of bad mental health.
The next-worst group is people who finished only 8th grade, with 4.66 compared to 5.61.
This is perhaps counterintuitive, as one might expect mental health to trend uniformly with education.
Indeed, Procedure 1 does not verify this top rank at $\alpha=0.1$ (p=$0.103$), or any that follow with $\alpha=0.2$. 


We also attempt to validate sets of three educational ranks. This set denotes whether or not one has attended at least some college. 
On log income, Procedure 2 rejected at reasonable $\alpha$. This is unsurprising, since all tests reject on the more stringent ordering task. 
For sleep, the top-three p-value is $0.17$. 
The p-value is even further for rejecting on mental health, at $0.38$.

\section{Discussion}

This work analyzes the problem
of rank verification for independent heteroskedastic Gaussians, without relying on multiple testing corrections. 
The hypothesis test in Theorem \ref{thm:winner} verifies the winner as the best, 
reducing in some cases to a level-$\alpha/2$ $Z$-test. 
It is extended for two procedures that validate the ranking of multiple top observations. 

Procedure 1 controls the FWER of the top $K\geq 0$ verified ranks.
A less conservative alternative is to control the false discovery rate, or FDR. To do so, sequential stopping rules from works like \citet{GSell2016Sequential} and \citet{li_barber} may be applicable.
Similarly, Procedure 2 either verifies the entire top-$K$ set, or nothing if it fails to reject.
FDR-controlling methods such as the BH procedure may be employed to identify members of the top $K$ set with high probability. 


The justification for Gaussian data is often made through a central limit theorem. 
In this setting, the sample variance may be treated as the population quantity.
It converges at an  $O(n^{-\frac{1}{2}})$ rate, so doing so is reasonable when the sample size is sufficiently large, as in the NHANES experiments. 
Otherwise, use of our procedures may be imprecise.
Future work in the same vein as \citet{asymptotic_bootstrap} could analyze their asymptotic validity.
Alternative extensions could propose new methodology to handle the unknown variance case.

Power, particularly as $d$ grows, is another open question. 
Verifying successive ranks is less likely when means are packed more compactly.
Our work generalizes that of \citet{HF}, which demonstrated superior power to existing methods.
Lastly, future work may also explore the case of correlated Gaussians. This may be challenging, however, as conditioning on $X_{k\neq 1,j}$ does not remove their means $\mu_k$ from the density.

\newpage
\bibliographystyle{apalike}
\bibliography{refs}

\appendix

\section{Type I Error Rate, Figure 1}\label{apx:bound}

Let $A, B, \ldots, E$ denote the rankings with $\mu_A=2$ and $\mu_E=-2$. 
A type I error occurs when $X_A$ is not the highest, yet the winner is verified. 
This is likeliest to occur when $A$ is lower than $B$ and $C$, and $X_B$ is significantly above $X_C$. 
Defining $\sigma_{BC}^2 = \Var(X_B-X_C) = \sigma_B^2+\sigma_C^2$, the type I error rate has the following lower bound:

$$\P(\text{Type I Error}) \geq \P\left(X_B > X_A, X_C > X_A, \frac{X_B-X_C}{\sigma_{BC}}>Z_{1-\alpha}\right).$$

Next, reparameterize $Q=X_B-X_C$, with realization $q=x_B-x_C$. 
The rejection event is now $Q >  Z_{1-\alpha}\sigma_{BC}$. 
Denote the density of $\cN(\mu_i, \sigma_i^2)$ at $x_i$ as $f_{i}(x_i) = \frac{1}{\sigma_i}\varphi\left(\frac{x_i-\mu_i}{\sigma_i}\right)$. 
Note $f_Q$ has mean $\mu_B-\mu_C$ and variance $\sigma_{BC}$. Then,

$$\P\left(X_B > X_A, X_C > X_A, Q>Z_{1-\alpha}\sigma_{BC}\right) = \int_{-\infty}^\infty \int_{Z_{1-\alpha}\sigma_{BC}}^\infty \int_{-\infty}^{x_C} f_A(x_A)f_C(x_C) f_Q(q) dA dQ dC.$$

We evaluate this expression via numerical integration using the \texttt{pracma} package in \texttt{R}. Each integral is discretized into 100 parts. In lieu of $\pm \infty$, we take 4 standard deviations above or below the mean. Plugging in the means and variances in Figure \ref{fig:unequal_vars}, the resulting type I error rate at $\alpha=0.05$ is at least 34.4\%. 

\section{Additional data}\label{apx:nhanes}

\begin{figure}
    \centering
    \includegraphics[width=.55\linewidth]{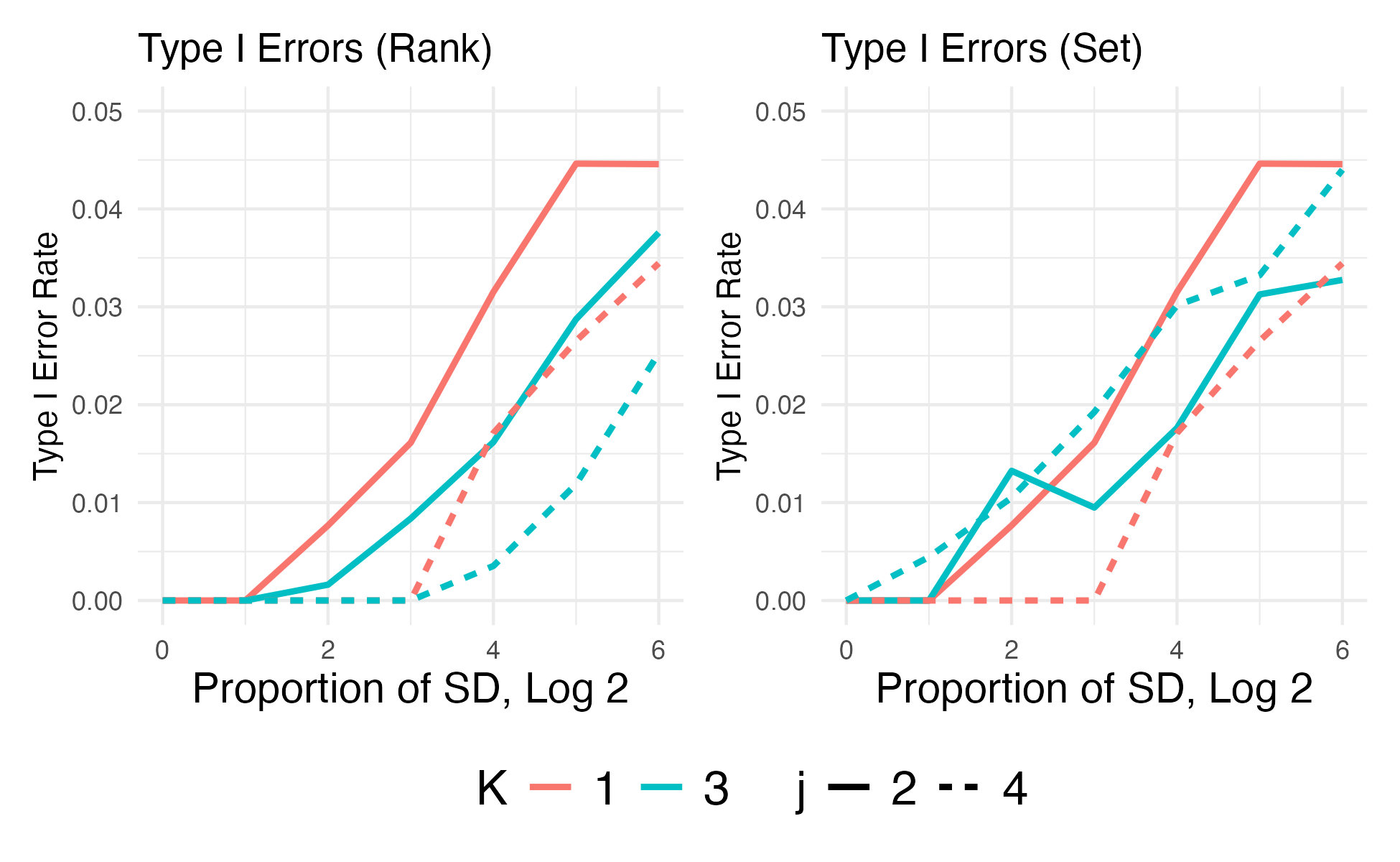}
    \caption{Type I Errors, Evenly-Spaced Means.}
    \label{fig:type_i_even}
\end{figure}

\textbf{Validation.} Figure \ref{fig:type_i_even} plots the type I error rates of our methods, given evenly-spaced means and inflating variance.
\\
\noindent \textbf{NHANES.} NHANES surveyed 10,000 representative Americans on their health and nutrition.
Significant public health findings should generalize to the U.S. civilian population from 2009-2012.

\begin{table}[h]
\centering
\begin{tabular}{|c|c|c|c|c|}
\hline
\textbf{Education}      & \textbf{N} & \textbf{Log Income} & \textbf{Hours of Sleep} & \textbf{Days of Bad Mental Health} \\ \hline
8th Grade      & 451 & 10.04     & 6.89       & 4.66                \\ \hline
9 - 11th Grade & 888 & 10.28     & 6.77       & 5.61                \\ \hline
High School    & 1517 & 10.52     & 6.78       & 4.20                \\ \hline
Some College   & 2267 & 10.74     & 6.87       & 4.64                \\ \hline
College Grad   & 2098 & 11.14     & 7.04       & 3.13                \\ \hline
\end{tabular}
\caption{Summary of NHANES data.}
\label{tab:nhanes}
\end{table}

\end{document}